\newtheorem{theorem}{Theorem}
\newtheorem{definition}[theorem]{Definition}
\newtheorem{remark}[theorem]{Remark}
\newtheorem{example}[theorem]{Example}
\newtheorem{corollary}[theorem]{Corollary}
\newtheorem{proposition}[theorem]{Proposition}
\newenvironment{proof}[1][Proof]{\noindent\textbf{#1: }}{\ \rule{0.5em}{0.5em}}
\title{Multilinear objective function-based clustering}
\author{Giovanni Rossi\\
\footnotesize{Departments of Computer Science and Engineering - DISI, and Mathematics}\\
\footnotesize{University of Bologna, Mura Anteo Zamboni 7, Italy 40126; \textsl{giovanni.rossi6@unibo.it}}}
\begin{document}

\maketitle

\begin{abstract}
The input of most clustering algorithms is a symmetric matrix quantifying similarity within data pairs. Such a matrix is here turned into a quadratic
set function measuring cluster score or similarity within data subsets larger than pairs. In general, any set function reasonably assigning a cluster score to
data subsets gives rise to an objective function-based clustering problem. When considered in pseudo-Boolean form, cluster score enables to evaluate fuzzy
clusters through multilinear extension MLE, while the global score of fuzzy clusterings simply is the sum over constituents fuzzy clusters of their MLE 
score. This is shown to be no greater than the global score of hard clusterings or partitions of the data set, thereby expanding a known result on extremizers
of pseudo-Boolean functions. Yet, a multilinear objective function allows to search for optimality in the interior of the hypercube. The proposed method only
requires a fuzzy clustering as initial candidate solution, for the appropriate number of clusters is implicitly extracted from the given data set.\smallskip

\footnotesize{\textsl{Keywords:} Fuzzy clustering, Similarity matrix, Pseudo-Boolean function, Multilinear extension, Gradient method, Local search.}
\end{abstract}

\section{Introduction}
Clustering means identifying groups within data, with the intent to have similarity between elements in a same group or cluster, and dissimilarity
between elements in different groups. It is important in a variety of fields at the interface of computer science, artificial intelligence and
engineering, including pattern recognition and learning, web mining and bioinformatics. In \textit{hard} clustering the sought group structure is
a partition of the data set, and clusters are blocks \cite{Aigner79}; each data point is in precisely one cluster, with full or unit membership. In
\textit{fuzzy} clustering data points distribute $[0,1]$-ranged memberships over clusters. This yields more flexibility, which is useful in many
applications \cite{Kashef+07,Valente+07}.

In \textit{objective function-based} clustering, the prevailing clusters obtain by maximizing or minimizing an objective function: any solution 
is mapped into a real quantity, i.e. its efficiency or cost, obtained as the sum over clusters of their own quality. Thus in hard clustering this
sum is over blocks, with a \textit{cluster score} set function taking real values over the $2^n-1$-set of non-empty data subsets \cite{Roubens82},
$n$ being the number of data. How to specify cluster score is the first issue addressed below.

When dealt with in pseudo-Boolean form, cluster score admits a unique polynomial multilinear extension MLE over $n$-dimensional unit hypercube
$[0,1]^n$. Such a MLE is a novel and seemingly appropriate measure of the score of fuzzy clusters. This is where to start for the clustering
method proposed here. As fuzzy clusterings are collections of fuzzy clusters, attention is placed on those such collections where the data
distribute memberships adding up to 1, with optimality found where the sum over fuzzy clusters of their MLE-score is maximal. If global cluster
score is evaluated via the MLE of cluster score, then its bounds are on hard clusterings. This expands a known result in pseudo-Boolean optimization
\cite{BorosHammer02}. Clustering is thus approached in terms of set partitioning, with this latter combinatorial optimization problem extended
from a discrete to a continuous domain \cite{Pardalos++06} and solved through a novel local search heuristic.
\subsection{Related work and approach}
Objective function-based fuzzy clustering mainly develops from the fuzzy $c$-means FcM \cite{Bezdek+92} and the possibilistic $c$-means PcM
\cite{Krishnapuram+96} algorithms. Given $n$ data points in $\mathbb R^m$, with $m$ observed features, both FcM and PcM iteratively act on
$c<n$ cluster centers or prototypes, aiming to minimize a cost: the sum over clusters of all distances between cluster elements and their center.
For any $c$ centers as input, at each iteration the memberships of data to clusters is re-defined so to minimize the sum of (fuzzy) distances from
centers, and next centers themselves are re-calculated so to minimize the sum of distances from (fuzzy) members. In FcM (but not in PcM) membership
distributions over the $c$ clusters add up to 1. The iteration stops when two consecutive fuzzy clusterings (specifying $c$ centers and $c\times n$
memberships) are sufficiently close (or coincide). This converges to a local minimum, and given non-convexity of the objective function, the choice
of suitable initial cluster centers is crucial. Initial cluster centers have to be arbitrary, and it is harsh to assess whether $c$ is a proper number
of clusters for the given data set \cite{Menard+02}. Much effort is thus devoted to finding the optimal number of clusters. One approach is to
validate fuzzy clusterings obtained at different values of $c$ by means of an index, and then selecting the value of $c$ for the output that scored
best on the index \cite{Zahid+++01}. This validation may be integrated into the FcM iterations, yielding a validity-guided method \cite{Bensaid++++++96}.
Main \textit{cluster validity indices} are: classification entropy, partition coefficient, uniform data functional, compactness and separation criteria.
They may be analyzed comparatively in terms membership distributions over clusters \cite{PalBezdek95}. A common idea is that clustering performance is
higher the more distributions are concentrated, as this formalizes a non-ambiguous classification \cite{Rezaee++98,XieBeni91}.

Recent clustering methods such as \textit{neural gas} \cite{HammerNeuralGas}, \textit{self organizing maps} \cite{WuChowSOM}, \textit{vector quantization}
\cite{Lughofer08,Du2010} and \textit{kernel methods} \cite{KernelMethods2004} maintain special attention on finding the optimal number of clusters for the
given data. In several classification tasks concerning protein structures, text documents, surveys or biological signals, an explicit metric vector space
(i.e. $\mathbb R^m$ above) is not available. Then, clustering may rely on \textit{spectral methods}, where the $\binom{n}{2}$ similarities within data
pairs are the adjacency matrix of a weighted graph, and the (non-zero) eigenvalues and eigenvectors of the associated Laplacian are used for partitioning
the data (or vertex) set. Specifically, the sought partition is to be such that lightest edges have endpoints in different blocks, whereas heaviest edges
have both endpoints in a same block. Although spectral clustering focuses on hard rather than fuzzy models, still it displays some analogy with the local
search method detailed below, as in both cases full membership of data points in prevailing clusters is decided in a single step. In fact, spectral methods
mostly focus on some first $c<n$ eigenvalues (in increasing order) \cite{SpectralClustering08,NgSpectral2002}, thus constraining the number of clusters. Here,
such a constraint is possible as well, although with suitable candidate solution the proposed local search autonomously finds an optimal (unrestricted) number
of clusters. 

Clustering is here approached by firstly quantifying the cluster score of every non-empty data subset, and secondly in terms of the associated set
partitioning combinatorial optimization problem \cite{KorteVygen2002}. Cluster score thus is a set function or, geometrically, a point in
$\mathbb R^{2^n-1}$, and rather than measuring a cost (or sum of distances) to be minimized (see above), it measures a worth to be maximized. The idea is to
quantify, for every data subset, both internal similarity and dissimilarity with respect to its complement. This resembles the
\textit{\textquotedblleft collective notion of similarity\textquotedblright} in information-based clustering \cite{PNAS2005}. Objective function-based
clustering intrisically relies on the assumption that every data subset has an associated real-valued worth (or, alternatively, a cost). A main novelty
proposed below is to deal with both hard and fuzzy clusters at once by means of the pseudo-Boolean form of set functions. In order to have the same input
as in many clustering algorithms, the basic cluster score function provided in the next section obtains from a given similarity matrix, and has polynomial
MLE of degree 2 \cite[pp. 157, 162]{BorosHammer02}. This also keeps the computational burden at a seemingly reasonable level.

\section{Cluster Score}
Given data set $N=\{1,\ldots ,n\}$, the input of most clustering algorithms \cite{XuWunsch05} is a symmetric similarity matrix $S\in [0,1]^{n\times n}$, 
with $S_{ij}$ quantifying similarity within data pairs $\{i,j\}\subset N$. If data points belong to a Euclidean space, i.e. $N\subset\mathbb R^m$, then
similarities $S_{ij}=1-d(i,j)$ may obtain through a normalized distance $d:N\times N\rightarrow[0,1]$. Not only pairs but also any data subset
$A\subseteq N$ may have a measurable internal similarity (and, possibly, dissimilarity with respect to its complement $A^c=N\backslash A$), interpreted
as its score $w(A)$ as a cluster. How to specify set function $w_S$ from given matrix $S$ is addressed hereafter.

For $2^N=\{A:A\subseteq N\}$, collection $\{\zeta(A,\cdot):A\in 2^N\}$ is a linear basis of the vector space $\mathbb R^{2^n}$ of real-valued functions $w$
on $2^N$, where $\zeta:2^N\times2^N\rightarrow\mathbb R$ is the element of the incidence algebra \cite{Rota64,Aigner79} of Boolean lattice
$(2^N,\cap,\cup)$ defined by $\zeta(A,B)=1$ if $B\supseteq A$ and $\zeta(A,B)=0$ if $B\not\supseteq A$. That is, the zeta function. Any $w$ corresponds to
linear combination\smallskip

$w(B)=\sum_{A\in 2^N}\mu^w(A)\zeta(A,B)=\sum_{A\subseteq B}\mu^w(A)$ for all $B\in 2^N$,\smallskip

with M\"obius inversion $\mu^w:2^N\rightarrow\mathbb R$ given by ($\subset$ is strict inclusion)\smallskip

$\mu^w(A)=\sum_{B\subseteq A}(-1)^{|A\backslash B|}w(B)\text{ }\left(\text{where }\zeta(B,A)=(-1)^{|A\backslash B|}\right)$, or\smallskip

$\mu^w(A)=w(A)-\sum_{B\subset A}\mu^w(B)\text{ }\left(\text{recursion, with }w(\emptyset)=0\right)$.\smallskip

This combinatorial \textit{\textquotedblleft analog of the fundamental theorem of the calculus\textquotedblright} \cite{Rota64} yields the unique MLE
$f^w:[0,1]^n\rightarrow\mathbb R$ of $w$, with values\smallskip

$w(B)=f^w(\chi_B)=\sum_{A\in 2^N}\left(\prod_{i\in A}\chi_B(i)\right)\mu^w(A)=\sum_{A\subseteq B}\mu^w(A)$ on vertices,\bigskip

and $f^w(q)=\sum_{A\in 2^N}\left(\prod_{i\in A}q_i\right)\mu^w(A)\textbf{ [1]}$\bigskip

on any $q=(q_1,\ldots ,q_n)\in[0,1]^n$. Conventionally, $\prod_{i\in\emptyset}q_i:=1$ \cite[p. 157]{BorosHammer02}.

A \textit{quadratic} MLE is a polynomial of degree 2: $\mu^w(A)=0$ if $|A|>2$. Geometrically, this means that $w$ is a point in a $\binom{n+1}{2}$-dimensional
vector (sub)space, i.e. $w\in\mathbb R^{\binom{n+1}{2}}$, as all its $2^n-1$ values are determined by the values taken by $\mu^w$ on the $n$ singletons and on the
$\binom{n}{2}$ pairs (with $n+\binom{n}{2}=\binom{n+1}{2}$). Similarity matrix $S$ factually has only $\binom{n}{2}$ valid entries (see below), and trying to
exploit them beyond a quadratic form for cluster score $w$ seems clumsy. Also, $S$ is intended precisely to measure such a score $S_{ij}$ for all
$\binom{n}{2}$ data pairs. The sought quadratic cluster score function $w$ is thus already defined on such pairs, i.e. $w(\{i,j\})=S_{ij}$ for all
$\{i,j\}\in 2^N$. How to assign scores $w(\{i\})$ to singletons $\{i\},i\in N$ seems a more delicate matter. If such scores are set equal to the $n$ entries
$S_{ii}=1$ along the main diagonal, then M\"obius inversion $\mu^w$ takes values $\mu^w(\{i\})=1$ on singletons and $\mu^w(\{i,j\})=S_{ij}-S_{ii}-S_{jj}<0$ on pairs
(while $\mu^w(A)=0$ for $1\neq|A|\neq 2$). This is a sufficient (but not necessary) condition for sub-additivity, i.e. $w(A\cup B)-w(A)-w(B)\leq 0$ for all
$A,B\in 2^N$ such that $A\cap B=\emptyset$. Then, the trivial partition where each data point is a singleton block is easily checked to be optimal. On the other hand,
setting $w(\{i\})=0$ for all $i\in N$ yields a M\"obius inversion with values $\mu^w(A)\geq 0$ for all $A\in 2^N$, and this is sufficient (but not necessary) for
super-additivity, i.e. $w(A\cup B)-w(A)-w(B)\geq 0$ for all $A,B\in 2^N$ with $A\cap B=\emptyset$. Then, it becomes optimal (and again trivial) to place all $n$ data
points into a unique grand cluster. An seemingly appropriate alternative to these two unreasonable situations is\smallskip

$w(\{i\})=\frac{1}{2}-\frac{1}{2(n-1)}\sum_{l\in N\backslash i}S_{il}=\sum_{l\in N\backslash i}\frac{1-S_{il}}{2(n-1)}$.\smallskip

In this way, $1-S_{il}$ quantifies diversity between $i\in N$ and $l\in N\backslash i$, which must be equally shared among them. The cluster score of
singleton $\{i\}$ then is the average of such $n-1$ diversities $\frac{1-S_{il}}{2},l\in N\backslash i$ collected by $i$. M\"obius inversion takes values
$\mu^w(\{i\})=w(\{i\})$ on singletons and, recursively,\smallskip

$\mu^w(\{i,j\})=\frac{nS_{ij}-1}{n-1}-\sum_{l\in N\backslash\{i,j\}}\frac{2-(S_{il}+S_{jl})}{2(n-1)}$\smallskip

on pairs. Note that the cluster score $w(A)$ of any $A\in 2^N$ does not depend on those $\binom{n-a}{2}$ entries $S_{ll'}$ of matrix $S$ such that $l,l'\in A^c$,
where $a=|A|$.

In particular, if $S_{ij}=1$ for $i\in A,j\in A\backslash i$ and $S_{il}=0$ for $i\in A,l\in A^c$, then\smallskip

$w(A)=\frac{a(a^2-3a+n+1)}{2(n-1)}=a\cdot\frac{n-a}{2(n-1)}+\binom{a}{2}\cdot\left(1-\frac{n-a}{n-1}\right)$,\smallskip

with $w(A)=\frac{1}{2},1,\ldots ,\frac{n^2-4n+5}{2},\binom{n}{2}$ for $a=1,2,\ldots ,n-1,n$. In terms of spectral clustering (see above), the corresponding adjacency matrix
identifies a subgraph spanned by vertex subset $A$ which is a clique, and where all edges have maximum weight, i.e. 1, with remarkable implications for the eigenvalues
of the normalized Laplacian; see \cite[p. 42]{GraphClustering07}.

This quadratic $w$, obtained from given similarity matrix $S$, is conceived as the
main example of a cluster score function. Multilinear objective function-based clustering deals with generic set functions, possibly non-quadratic, but reasonably measuring
a cluster score of data subsets. In fact, the MLE $\textbf{[1]}$ above, of set functions $w$, is where to start for the investigation proposed in the remainder of this work.
In view of the rich literature on pseudo-Boolean methods \cite{BooleanFunctions}, the MLE of cluster score appears very suitable for evaluating fuzzy clusters, especially in
that established definitions of neighborhood and derivative may be expanded to fit the broader setting formalized in the sequel. Specifically, while the $n$ variables of traditional
pseudo-Boolean functions range each in $[0,1]$, the $n$ variables of the novel near-Boolean form \cite{Rossi2015} considered here range each in a $2^{n-1}-1$-dimensional unit
simplex. The reason for this is the purpose to use the MLE of cluster score for evaluating not only fuzzy clusters, but also and most importantly fuzzy clusterings, which are
collections $q^1,\ldots ,q^m\in[0,1]^n$ of fuzzy clusters, and thus global score is quantifiable as $\sum_{1\leq k\leq m}f^w(q^k)$. In this respect, note that PcM
algorithms allow for membership distributions adding up to quantities $<1$ (see above) for handling outliers. These shall be placed each in a singleton block of the partition found
here via gradient-based local search. Therefore, membership distributions are like in FcM methods, i.e. ranging in a $2^{n-1}-1$-dimensional unit simplex. 

\section{Fuzzy Clustering}
A fuzzy clustering is a $m$-set $q^1,\ldots ,q^m\in [0,1]^n$ of fuzzy clusters or points in the $n$-dimensional unit hypercube, with $(q_i^1,\ldots ,q_i^m)$ being $i$'s membership
distribution. The $2^{n-1}$-set $2^N_i=\{A:i\in A\in 2^N\}$ of subsets containing each $i\in N$ has associated $2^{n-1}-1$-dimensional unit simplex\bigskip

$\Delta_i=\left\{\left(q_i^{A_1},\ldots ,q_i^{A_{2^{n-1}}}\right):q_i^{A_k}\geq 0\text{ for }1\leq k\leq 2^{n-1},\sum_{1\leq k\leq 2^{n-1}}q_i^{A_k}=1\right\}$,\bigskip

where $\left\{A_1,\ldots ,A_{2^{n-1}}\right\}=2^N_i$ and $q_i\in\Delta_i$ is $i$'s membership distribution.
\begin{definition}
A fuzzy cover $\textbf q$ specifies, for each data point $i\in N$, a membership distribution over the $2^{n-1}$ data subsets $A\in 2^N_i$ containing it. Hence
$\textbf q=(q_1,\ldots ,q_n)\in\Delta_N$, where $\Delta_N=\times_{1\leq i\leq n}\text{ }\Delta_i$.
\end{definition}
Equivalently, $\textbf q=\left\{q^A:\emptyset\neq A\in 2^N,q^A\in[0,1]^n\right\}$ is a $2^n-1$-set  whose elements $q^A=\left(q^A_1,\ldots ,q^A_n\right)$ are points in the
$n$-dimensional hypercube corresponding to non-empty data subsets $\emptyset\neq A\in 2^N$ and specifying a membership $q_i^A$ for each $i\in N$, with $q_i^A\in[0,1]$ if $i\in A$
while $q_j^A=0$ if $j\in A^c$. Fuzzy covers thus generalize traditional fuzzy clusterings, as these latter are commonly intended as collections $\{q^1,\ldots ,q^m\}$ as above where,
in addition, for every fuzzy cluster $q^k$ the associated data subset implicitly is $\{i:q^k_i>0\}$. Conversely, fuzzy covers allow for situations where $0<|\{i:q_i^A>0\}|<|A|$ for
some $\emptyset\neq A\in 2^N$, although an \textit{exactness} condition introduced below shows that such cases may be ignored.

Fuzzy covers being collections of points in $[0,1]^n$, and the MLE $f^w$ of $w$ allowing precisely to evaluate such points, the global score $W(\textbf q)$ of any
$\textbf q\in\Delta_N$ is the sum over all its elements $q^A,A\in 2^N$ of their own score as quantified by $f^w$ (see $\textbf{[1]}$). That is,\bigskip

$W(\textbf q)=\sum_{A\in 2^N}f^w(q^A)=\sum_{A\in 2^N}\left[\sum_{B\subseteq A}\left(\prod_{i\in B}q_i^A\right)\mu^w(B)\right]$,\bigskip

or $W(\textbf q)=\sum_{A\in 2^N}\left[\sum_{B\supseteq A}\left(\prod_{i\in A}q_i^B\right)\right ]\mu^w(A)\textbf{ [2]}$.\bigskip

\begin{example}
For $N=\{1,2,3\}$, consider $w(\{1\})=w(\{2\})=w(\{3\})=0.2$, $w(\{1,2\})=0.8$, $w(\{1,3\})=0.3$, $w(\{2,3\})=0.6$, $w(N)=0.7$. Membership distributions over subsets $2^N_i,i=1,2,3$ are
$q_1\in\Delta_1,q_2\in\Delta_2,q_3\in\Delta_3$,\smallskip

$q_1=\left(\begin{array}{c}q_1^1 \\q_1^{12} \\q_1^{13} \\q_1^N\end{array}\right)\text{, }
q_2=\left(\begin{array}{c}q_2^2 \\q_2^{12} \\q_2^{23} \\q_2^N\end{array}\right)\text{, } 
q_3=\left(\begin{array}{c}q_3^3 \\q_3^{13} \\q_3^{23} \\q_3^N\end{array}\right)$.\bigskip

If $\hat q_1^{12}=\hat q_2^{12}=1$, then any membership $q_3\in\Delta_3$ yields\smallskip

$w(\{1,2\})+\left(q_3^3+q_3^{13}+q_3^{23}+q_3^N\right)\mu^w(\{3\})=w(\{1,2\})+w(\{3\})=1$.\smallskip

This means that there is a continuum of fuzzy covers achieving maximum score: $W(\hat q_1,\hat q_2,q_3)=1$ independently from $q_3$. In order to select 
$\hat{\textbf q}=(\hat q_1,\hat q_2,\hat q_3)$ where $\hat q_3^3=1$, attention must be placed only on \textit{exact} ones, defined hereafter.
\end{example}
For any two fuzzy covers $\textbf q=\{q^A:\emptyset\neq A\in 2^N\}$ and $\hat{\textbf q}=\{\hat q^A:\emptyset\neq A\in 2^N\}$, define $\hat{\textbf q}$ to be a \textit{shrinking} of
$\textbf q$ if $A\in 2^N,|A|>1$ satisfies $\sum_{i\in A}q_i^A>0$ and\bigskip

$\hat q^B_i=q^B_i\text{ if } B\not\subseteq A$ and $\hat q^B_i= 0\text{ if }B=A,\text{ for all }B\in 2^N,i\in N$,\bigskip

$\sum_{B\subset A}\hat q^B_i=q^A_i+\sum_{B\subset A}q_i^B\text{ for all }i\in A$.\bigskip

In words, a shrinking reallocates the whole membership mass $\sum_{i\in A}q_i^A>0$ from $A\in 2^N$ to all proper subsets $B\subset A$, involving all and only
those data points $i\in A$ with strictly positive membership $q_i^A>0$.
\begin{definition}
Fuzzy cover $\textbf q\in\Delta_N$ is exact as long as $W(\textbf q)\neq W(\hat{\textbf q})$ for all shrinkings $\hat{\textbf q}$ of \textbf q.
\end{definition}
\begin{proposition}
If $\textbf q$ is exact, then $\left|\left\{i\in A:q_i^A>0\right\}\right|\in\{0,|A|\}$ for all $A\in 2^N$.
\end{proposition}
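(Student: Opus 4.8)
The plan is to prove the contrapositive. Suppose some $\emptyset\neq A\in 2^N$ has $0<|T|<|A|$, where $T:=\{i\in A:q_i^A>0\}$; I must produce a shrinking $\hat{\textbf q}$ of $\textbf q$ with $W(\hat{\textbf q})=W(\textbf q)$, so that $\textbf q$ is not exact. First observe $|A|\geq 2$, $\emptyset\neq T\subsetneq A$, and $\sum_{i\in A}q_i^A=\sum_{i\in T}q_i^A>0$, so shrinkings of $\textbf q$ at $A$ exist. I would work inside the family $\mathcal R$ of those shrinkings that leave every membership unchanged except that, for each $i\in T$, the mass $q_i^A$ is reallocated among the sets $B$ with $i\in B\subseteq T$ (all of which are proper subsets of $A$). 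Then $\mathcal R$ is a non-empty convex polytope — a product over $i\in T$ of simplices in the variables $\{\hat q_i^B:i\in B\subseteq T\}$ subject to $\sum_{B:\,i\in B\subseteq T}\hat q_i^B=q_i^A+\sum_{B:\,i\in B\subseteq T}q_i^B$ — hence connected, and $W$ is a polynomial on it, so $W(\mathcal R)$ is a closed interval. By the intermediate value theorem it then suffices to show $W(\textbf q)\in W(\mathcal R)$.

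Next I would record the shape of $W(\hat{\textbf q})-W(\textbf q)$ on $\mathcal R$ via $\textbf{[2]}$: the coefficient of $\mu^w(\emptyset)$ is multiplied by $\mu^w(\emptyset)=0$, the coefficient of each $\mu^w(\{i\})$ equals $\sum_{B\ni i}q_i^B=1$ and so is unaffected, and for $|C|\geq 2$ the coefficient of $\mu^w(C)$ changes only when $C\subseteq T$ (every reallocated membership belongs to a coordinate in $T$ and to a set contained in $T$, the lone remaining change being that the $T$-coordinates of $q^A$ are zeroed). Hence $W(\hat{\textbf q})-W(\textbf q)=\sum_{C\subseteq T,\,|C|\geq 2}\mu^w(C)\,\delta_C(\hat{\textbf q})$, an explicit polynomial in the reallocation variables. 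Two natural members of $\mathcal R$ are $\hat{\textbf q}^-$, moving each $q_i^A$ onto $\{i\}$ (so $\hat q_i^{\{i\}}=q_i^{\{i\}}+q_i^A$), and $\hat{\textbf q}^+$, moving each $q_i^A$ onto $T$ itself (so $\hat q_i^{T}=q_i^{T}+q_i^A$); short computations give $W(\hat{\textbf q}^-)-W(\textbf q)=-\sum_{C\subseteq T,\,|C|\geq 2}\mu^w(C)\prod_{i\in C}q_i^A$ and $W(\hat{\textbf q}^+)-W(\textbf q)=\sum_{C\subseteq T,\,|C|\geq 2}\mu^w(C)\big[\prod_{i\in C}(q_i^T+q_i^A)-\prod_{i\in C}q_i^T-\prod_{i\in C}q_i^A\big]$.

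When $|T|=1$ the first sum is empty, so $W(\hat{\textbf q}^-)=W(\textbf q)$ and we are done; when $|T|=2$ only $C=T$ survives and the two differences become $-\mu^w(T)\,q_i^Aq_{i'}^A$ and $\mu^w(T)\big(q_i^Tq_{i'}^A+q_i^Aq_{i'}^T\big)$, opposite-sign multiples of $\mu^w(T)$ by non-negative reals, so $0$ lies between them and IVT finishes. The case $|T|\geq 3$ is the one I expect to be the main obstacle: the higher-degree residual terms in the two displayed expressions need no longer have aligned signs, so this particular pair may fail to bracket $W(\textbf q)$. There I would use the full flexibility of $\mathcal R$ — splitting each $q_i^A$ among all $B$ with $i\in B\subseteq T$, which turns $W-W(\textbf q)$ into a multiaffine function of the reallocation weights over a box, with extrema attained at the box's vertices — and show $W(\textbf q)$ still lies in the attained interval by exploiting the simplex constraints $\sum_{B\ni i}q_i^B=1$ (hence $q_i^{\{i\}}+q_i^T+q_i^A\leq 1$, which yields $\prod_{i\in C}q_i^A+\prod_{i\in C}q_i^T\leq\prod_{i\in C}\big(q_i^{\{i\}}+q_i^T+q_i^A\big)$ and analogous bounds) to locate the relevant vertex values on the two sides of $W(\textbf q)$. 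In every case the intermediate value theorem then delivers $\hat{\textbf q}\in\mathcal R$ with $W(\hat{\textbf q})=W(\textbf q)$, contradicting exactness.
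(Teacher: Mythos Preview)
Your argument is complete and correct only for $|T|\le 2$. The case $|T|\ge 3$ is, as you yourself flag, the substantive one, and what you have written there is not a proof: you assert that the extrema of $W-W(\textbf q)$ over the reallocation polytope $\mathcal R$ lie on either side of $0$, but you never establish this. Multiaffinity tells you only that extrema are attained at vertices; it gives no information about their signs. The inequality you record, $\prod_{i\in C}q_i^A+\prod_{i\in C}q_i^T\le\prod_{i\in C}(q_i^{\{i\}}+q_i^T+q_i^A)$, is true but does not control the sign of $\sum_{C}\mu^w(C)[\cdots]$ when the M\"obius values $\mu^w(C)$ have mixed signs across the various $C\subseteq T$. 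In short, for $|T|\ge 3$ you have not produced any pair of points in $\mathcal R$ whose $W$-values are guaranteed to bracket $W(\textbf q)$, and without that the intermediate value step is vacuous. You would need either a concrete construction of such a bracketing pair valid for arbitrary $w$, or a different mechanism altogether.

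By contrast, the paper does not go through IVT or case analysis on $|T|$. It writes down directly a system of conditions on the reallocated memberships $\hat q_i^B$, $i\in B\subseteq T$ --- the $\alpha$ linear mass-balance equations together with the product constraints $\prod_{i\in B}\hat q_i^B=\prod_{i\in B}q_i^B+\prod_{i\in B}q_i^A$ for each $B\subseteq T$ with $|B|>1$ --- and argues that, because there are $\alpha 2^{\alpha-1}$ unknowns against $2^\alpha-1$ equations, the system is underdetermined and admits a continuum of solutions, each of which gives a score-preserving shrinking. This handles all $\alpha>1$ uniformly. Your route via connectedness and IVT is conceptually appealing and transparent in low dimension, but to make it work for general $|T|$ you would have to supply the missing bracketing argument, which is exactly the step you left open.
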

\begin{proof}
For $\emptyset\subset A^+(\textbf q)=\left\{i:q_i^A>0\right\}\subset A$, with $\alpha=|A^+(\textbf q)|>1$, note that\bigskip

$f^w(q^A)=\sum_{B\subseteq A^+(\textbf q)}\left(\prod_{i\in B}q_i^A\right)\mu^w(B)$.\bigskip

Let shrinking $\hat{\textbf q}$, with $\hat q^{B'}=q^{B'}$ if $B'\not\in 2^{A^+(\textbf q)}$, satisfy conditions\bigskip

1) $\sum_{B\in 2^N_i\cap 2^{A^+(\textbf q)}}\hat q_i^B=q_i^A+\sum_{B\in 2^N_i\cap 2^{A^+(\textbf q)}}q_i^B\text{ for all }i\in A^+(\textbf q)$, and\bigskip

2) $\prod_{i\in B}\hat q_i^B=\prod_{i\in B}q_i^B+\prod_{i\in B}q_i^A\text{ for all }B\in 2^{A^+(\textbf q)}$ such that $|B|>1$.\bigskip

These are $2^\alpha-1$ equations with $\sum_{1\leq k\leq\alpha}k\binom{\alpha}{k}>2^{\alpha}$ variables $\hat q_i^B,B\subseteq A^+(\textbf q)$, $i\in B$. Thus there is a
continuum of solutions, each providing precisely a shrinking $\hat{\textbf q}$ where\bigskip

$\sum_{B\in 2^{A^+(\textbf q)}}f^w(\hat q^B)=f^w(q^A)+\sum_{B\in 2^{A^+(\textbf q)}}f^w(q^B)$.\bigskip

This entails that \textbf q is not exact.
\end{proof}

For given $w$, the global score of any fuzzy cover also attains on many fuzzy clusterings. This justifies the following (in line with standard terminology). 
\begin{definition}
Fuzzy clusterings are exact covers.
\end{definition}
The global score of any fuzzy clustering is shown below to also attain on some hard clustering, thereby expanding a result on extremizers of pseudo-Boolean
functions \cite[p. 163]{BorosHammer02}. 

\section{Hard clustering}
\label{sec:hard-clusterings}
Hard clusterings or partitions of $N$ \cite{Aigner79} are fuzzy clusterings where $q_i^A\in\{0,1\}$ for all $A\in 2^N$ and all $i\in A$. Among the maximizers of any
objective function $W:\Delta_N\rightarrow\mathbb R$ as above there always exist fuzzy clusterings $(q_1,\ldots ,q_n)\in\Delta_N$ such that $q_i\in ex(\Delta_i)$ for
all $i\in N$, where $ex(\Delta_i)$ denotes the $2^{n-1}$-set of extreme points of $\Delta_i$. For $\textbf q\in\Delta_N,i\in N$, let $\textbf q=q_i|\textbf q_{-i}$,
with $q_i\in\Delta_i$ and $\textbf q_{-i}\in\Delta_{N\backslash i}=\times_{j\in N\backslash i}\text{ }\Delta_j$. Then, for any $w$,\bigskip

$W(\textbf q)=\sum_{A\in 2^N_i}f^w(q^A)+\sum_{A'\in 2^N\backslash 2^N_i}f^w(q^{A'})=$\bigskip

$=\sum_{A\in2^N_i}\sum_{B\subseteq A\backslash i}\left(\prod_{j\in B}q^A_j\right)\Big(q^A_i\mu^w(B\cup i)+\mu^w(B)\Big)+$\bigskip

$+\sum_{A'\in 2^N\backslash 2^N_i}\sum_{B'\subseteq A'}\left(\prod_{j'\in B'}q^{A'}_{j'}\right)\mu^w(B')$\bigskip

at all $\textbf q\in\Delta_N$ and for all $i\in N$. Now define\bigskip

$W_i(q_i|\textbf q_{-i})=\sum_{A\in 2^N_i}q^A_i\left[\sum_{B\subseteq A\backslash i}\left(\prod_{j\in B}q^A_j\right)\mu^w(B\cup i)\right]$,\bigskip

$W_{-i}(\textbf q_{-i})=\sum_{A\in 2^N_i}\left[\sum_{B\subseteq A\backslash i}\left(\prod_{j\in B}q^A_j\right)\mu^w(B)\right]+$\bigskip

$+\sum_{A'\in 2^N\backslash 2^N_i}\left[\sum_{B'\subseteq A'}\left(\prod_{j'\in B'}q^{A'}_{j'}\right)\mu^w(B')\right]$,\bigskip

$\text{yielding }W(\textbf q)=W_i(q_i|\textbf q_{-i})+W_{-i}(\textbf q_{-i})\textbf{ [3]}$.

\begin{proposition}
For all $\textbf q\in\Delta_N$, there are $\underline{\textbf q},\overline{\textbf q}\in\Delta_N$ such that

(i) $\text{(i) }W(\underline{\textbf q})\leq W(\textbf q)\leq W(\overline{\textbf q})$ and,

(ii) $\underline q_i,\overline q^i\in ex(\Delta_i)\text{ for all }i\in N$.
\end{proposition}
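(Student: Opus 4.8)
The plan is to exploit decomposition \textbf{[3]} together with the observation that, for any fixed $\textbf q_{-i}\in\Delta_{N\backslash i}$, the map $q_i\mapsto W(q_i|\textbf q_{-i})$ is affine on the simplex $\Delta_i$. Indeed $W_{-i}(\textbf q_{-i})$ does not involve $q_i$ at all, while in $W_i(q_i|\textbf q_{-i})=\sum_{A\in 2^N_i}q_i^A\big[\sum_{B\subseteq A\backslash i}\big(\prod_{j\in B}q_j^A\big)\mu^w(B\cup i)\big]$ each coordinate $q_i^A$ occurs only to the first power, with bracketed coefficient depending solely on $\textbf q_{-i}$. Hence $W(\cdot\,|\textbf q_{-i})$ is a linear functional of $q_i$ plus a constant, and an affine function on a polytope attains both its maximum and its minimum at extreme points. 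Therefore there exist $\overline q_i,\underline q_i\in ex(\Delta_i)$ --- each concentrating all of $i$'s membership on a single subset $A\in 2^N_i$ --- with $W(\underline q_i|\textbf q_{-i})\leq W(\textbf q)\leq W(\overline q_i|\textbf q_{-i})$.

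Next I would turn this one-variable fact into the full statement by a single sweep through $N$. Starting from $\textbf q$, process $i=1,\ldots,n$ in turn; at step $i$, holding the current fuzzy cover fixed at every $j\neq i$, replace its $i$-th coordinate by an extreme point of $\Delta_i$ maximizing $W$ over $\Delta_i$ (when building $\overline{\textbf q}$) or minimizing it (when building $\underline{\textbf q}$). By the preceding paragraph such an extreme point exists, and the replacement does not decrease, respectively does not increase, the value of $W$. Since this step changes only the $i$-th coordinate, every coordinate $j<i$ already moved into $ex(\Delta_j)$ stays there; hence after step $n$ all coordinates lie in the respective $ex(\Delta_i)$. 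As $W$ varied monotonically along the sweep, we obtain $W(\textbf q)\leq W(\overline{\textbf q})$ and $W(\underline{\textbf q})\leq W(\textbf q)$, which is exactly (i)--(ii). In particular $\overline{\textbf q}$ and $\underline{\textbf q}$ have all $q_i^A\in\{0,1\}$, so they are hard clusterings in the sense of Section \ref{sec:hard-clusterings}.

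I do not anticipate a genuine obstacle: the argument reduces to noticing that \textbf{[3]} makes $W$ separately affine in each block $q_i$, and then applying $n$ times the elementary fact that linear optimization over a polytope is attained at a vertex. The one point deserving a line of care is checking that the coefficient $\sum_{B\subseteq A\backslash i}\big(\prod_{j\in B}q_j^A\big)\mu^w(B\cup i)$ multiplying $q_i^A$ is genuinely independent of $q_i$ --- which holds because the inner sum ranges over subsets of $A\backslash i$ --- and, symmetrically, that $W_{-i}$ gathers exactly the monomials of $W$ in which $q_i^A$ does not appear, so that \textbf{[3]} is an identity rather than merely an inequality; once this is in place the rest is the standard vertex-optimization induction.
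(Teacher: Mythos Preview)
Your proposal is correct and follows essentially the same route as the paper: the paper too writes $W_i(q_i|\textbf q_{-i})=\langle q_i,w_{\textbf q_{-i}}\rangle$ as a scalar product (your ``affine in $q_i$'' observation), concludes that the optimum over $\Delta_i$ is attained at an extreme point, and then runs a coordinate-wise \textsc{RoundUp} sweep that replaces each $q_i$ by such an extreme point while preserving monotonicity of $W$. The only cosmetic difference is that the paper's loop skips indices already in $ex(\Delta_i)$, whereas you process all $i=1,\ldots,n$; this changes nothing.
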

\begin{proof}
For all $i\in N$ and $\textbf q_{-i}\in\Delta_{N\backslash i}$, define
$w_{\textbf q_{-i}}:2^N_i\rightarrow\mathbb R$ by\bigskip

$w_{\textbf q_{-i}}(A)=\sum_{B\subseteq A\backslash i}\left(\prod_{j\in B}q^A_j\right)\mu^w(B\cup i)\textbf{ [4]}$.

Let $\mathbb A^+_{\textbf q_{-i}}=\arg\max w_{\textbf q_{-i}}$ and $\mathbb A^-_{\textbf q_{-i}}=\arg\min w_{\textbf q_{-i}}$,
with $\mathbb A^+_{\textbf q_{-i}}\neq\emptyset\neq\mathbb A^-_{\textbf q_{-i}}$ at all $\textbf q_{-i}$. Most importantly,\bigskip

$W_i(q_i|\textbf q_i)=\sum_{A\in 2^N_i}\Big(q^A_i\cdot w_{\textbf q_{-i}}(A)\Big)=\langle q_i,w_{\textbf q_{-i}}\rangle\textbf{ [5]}$,\bigskip

where $\langle\cdot ,\cdot\rangle$ denotes scalar product. Thus for given membership distributions of all $j\in N\backslash i$, global score is affected by $i$'s
membership distribution through a scalar product. In order to maximize (or minimize) $W$ by suitably choosing $q_i$ for given $\textbf q_{-i}$, the whole of $i$'s membership mass
must be placed over $\mathbb A^+_{\textbf q_{-i}}$ (or $\mathbb A^-_{\textbf q_{-i}}$), anyhow. Hence there are precisely $|\mathbb A^+_{\textbf q_{-i}}|>0$
(or $|\mathbb A^-_{\textbf q_{-i}}|>0$) available extreme points of $\Delta_i$. The following procedure selects (arbitrarily) one of them.\smallskip

\textsc{RoundUp}$(w,\textbf q)$\smallskip

\textsl{Initialize:} Set $t=0$ and $\textbf q(0)=\textbf q$.\smallskip

\textsl{Loop:} While there is a $i\in N$ with $q_i(t)\not\in ex(\Delta_i)$,\smallskip

set $t=t+1$ and:
\begin{enumerate}
\item[(a)] select some $A^*\in\mathbb A^+_{\textbf q_{-i}(t)}$,
\item[(b)] define, for all $j\in N,A\in 2^N$,\smallskip

$q^A_j(t)=q_j^A(t-1)\text{ if }j\neq i$,\smallskip

$q^A_j(t)=1\text{ if }j=i\text{ and } A=A^*$,\smallskip

$q^A_j(t)=0$ otherwise.
\end{enumerate}

\textsl{Output:} Set $\overline{\textbf q}=\textbf q(t)$.\smallskip

Every change $q_i^A(t-1)\neq q_i^A(t)=1$ (for any $i\in N,A\in 2^N_i$) induces a non-decreasing variation
$W(\textbf q(t))-W(\textbf q(t-1))\geq 0$. Hence, the sought $\overline{\textbf q}$ is provided in at most $n$ iterations.
Analogously, replacing $\mathbb A^+_{\textbf q_{-i}}$ with $\mathbb A^-_{\textbf q_{-i}}$ yields the sought minimizer $\underline{\textbf q}$.
\end{proof}
\begin{remark}
For $i\in N,A\in 2^N_i$, if all $j\in A\backslash i\neq\emptyset$ satisfy $q_j^A=1$, then $\textbf{[4]}$ yields $w_{\textbf q_{-i}}(A)=w(A)-w(A\backslash i)$, while 
$w_{\textbf q_{-i}}(\{i\})=w(\{i\})$ regardless of $\textbf q_{-i}$. For quadratic $w$ obtained above from similarity matrix $S$,\bigskip

$w_{\textbf q_{-i}}(A)=w(\{i\})+\sum_{j\in A\backslash i}q_j^A\text{ }\mu^w(\{i,j\})$.
\end{remark}
If the global score of fuzzy clusterings is quantified as the sum over constituents fuzzy clusters of their MLE-score, then for any $w$ there are hard clusterings among
both the maximizers and minimizers. This seems crucial because many applications may be modeled in terms of \textit{set partitioning}, and in such a combinatorial
optimization problem fuzzy clustering is not feasible. An important example is winner determination in combinatorial auctions \cite{Sandholm02}, where a set $N$ of items
to be sold must be partitioned into bundles towards revenue maximization. The maximum bid received for each bundle $\emptyset\neq A\subseteq N$ defines the input set
function $w$. The above result entails that if the objective function is multilinearly extended over the continuous domain of fuzzy clusterings, then any found solution
can be promptly adapted to the restricted domain of partitions, with no score loss. The problem can thus be approached from a geometrical perspective, allowing for novel
search strategies. Partitions $P=\{A_1,\ldots ,A_{|P|}\}\subset 2^N$ of $N$ are families of pairwise disjoint subsets whose union is $N$, i.e.
$N=\cup_{1\leq k\leq |P|}\text{ }A_k$ and $A_k\cap A_l=\emptyset,1\leq k<l\leq |P|$. Any $P$ corresponds to the collection $\{\chi_A:A\in P\}$ of those $|P|$ hypercube
vertices identified by the characteristic functions of its blocks (see above). Partitions $P$ can also be seen as $\textbf p\in\Delta_N$ where $p_i^A=1$ for all
$A\in P,i\in A$. The above findings yield the following.
\begin{corollary}
For any $w$, some partition $P$ satisfies $W(\textbf p)\geq W(\textbf q)$\smallskip

for all $\textbf q\in\Delta_N$, with $W(\textbf p)=\sum_{A\in P}w(A)$.
\end{corollary}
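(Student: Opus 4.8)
The plan is to obtain the corollary directly from the preceding Proposition (the one establishing \textsc{RoundUp}), by observing that the fuzzy cover it returns is a collection of hypercube vertices which reorganizes, with no change of global score, into a genuine partition of $N$.

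First I would apply that Proposition to a global maximizer of $W$. Since $\Delta_N$ is compact and $W$ is, by $\textbf{[2]}$, a polynomial and hence continuous, there is $\textbf q^*\in\arg\max_{\textbf q\in\Delta_N}W(\textbf q)$. Put $\overline{\textbf q}=\textsc{RoundUp}(w,\textbf q^*)$; then $W(\overline{\textbf q})\geq W(\textbf q^*)$ and $\overline q_i\in ex(\Delta_i)$ for every $i\in N$. The latter says that each data point $i$ places its whole membership mass on a single subset $A(i)\in 2^N_i$, so that $\overline q_i^A\in\{0,1\}$ with $\overline q_i^A=1\Leftrightarrow A=A(i)$; in particular every fuzzy cluster vector $\overline q^A$ is a $0/1$ vector.

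Next I would read off the partition. For $A\in 2^N$ set $A^+:=\{i\in N:A(i)=A\}\subseteq A$. Each $i\in N$ lies in exactly one such set (the one with $A=A(i)$), distinct subsets $A$ are pairwise disjoint on the $A^+$'s, and distinct subsets with $A^+\neq\emptyset$ give distinct $A^+$; hence $P:=\{A^+:A\in 2^N,\ A^+\neq\emptyset\}$ is a partition of $N$, and $A\mapsto A^+$ is a bijection from $\{A:A^+\neq\emptyset\}$ onto $P$. Since $\overline q^A=\chi_{A^+}$ is a hypercube vertex, $\textbf{[1]}$ gives $f^w(\overline q^A)=w(A^+)$, so that
\[
W(\overline{\textbf q})=\sum_{A\in 2^N}f^w(\overline q^A)=\sum_{A\in 2^N}w(A^+)=\sum_{C\in P}w(C),
\]
where $w(\emptyset)=0$ disposes of the subsets with $A^+=\emptyset$. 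Taking $\textbf p\in\Delta_N$ to be the partition $P$ viewed as a point of $\Delta_N$ (i.e. $p_i^C=1$ for $C\in P,i\in C$, and $p_i^A=0$ otherwise), the same vertex evaluation yields $W(\textbf p)=\sum_{C\in P}f^w(\chi_C)=\sum_{C\in P}w(C)$. Chaining the identities, $W(\textbf p)=W(\overline{\textbf q})\geq W(\textbf q^*)\geq W(\textbf q)$ for all $\textbf q\in\Delta_N$, and the side identity $W(\textbf p)=\sum_{A\in P}w(A)$ is exactly the one just obtained.

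I do not expect a real obstacle: the substantive work is already in the previous Proposition. The one point deserving care is conceptual rather than computational, namely that the output of \textsc{RoundUp} is a fuzzy cover but in general \emph{not} a partition, since a subset $A$ carrying membership mass may strictly contain its effective support $A^+$; the map $A\mapsto A^+$ repairs this, and it is cost-free precisely because a $0/1$-valued fuzzy cluster vector supported on $A^+$ has MLE value $w(A^+)$ irrespective of which superset it is nominally attached to. A minor point is the quantifier order (\emph{some} $P$, \emph{for all} $\textbf q$): it is handled by running the construction on the maximizer $\textbf q^*$ as above, or equivalently — avoiding compactness — by first choosing $P$ optimal among the finitely many partitions and then dominating an arbitrary $\textbf q$ through the same construction applied to $\textbf q$ itself.
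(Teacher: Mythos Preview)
Your argument is correct and follows essentially the same route as the paper, which simply records that the corollary follows from Propositions~4 and~6. You invoke Proposition~6 (\textsc{RoundUp}) explicitly and then carry out by hand, via the map $A\mapsto A^+$, the shrinking step that the paper delegates to Proposition~4; your direct computation $f^w(\overline q^A)=f^w(\chi_{A^+})=w(A^+)$ is exactly what makes that shrinking score-preserving.
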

\begin{proof}
Follows from propositions 4 and 6.
\end{proof}

A further remark concerns cluster validity \cite{WangZhang2007}, with focus on those indices that validate fuzzy clusterings by relying exclusively on membership distributions.
As already observed, a basic argument is that the more such distributions are concentrated, the less ambiguous is the fuzzy classification. Evidently, hard clusterings provide
$n$ distributions each concentrated on a unique extreme point of the associated unit simplex. The above result indicates that if global score is evaluated through MLE, then
validation may ignore membership distributions, as the score of any optimal fuzzy clustering also obtains by means of a hard one.

\section{Local search}
\label{sec:search}
Defining global maximizers is clearly immediate.
\begin{definition}
Fuzzy clustering $\hat{\textbf q}\in\Delta_N$ is a global maximizer if $W(\hat{\textbf q})\geq W(\textbf q)$ for all $\textbf q\in\Delta_N$. 
\end{definition}
Concerning local maximizers, consider a vector $\omega=(\omega_1,\ldots ,\omega_n)\in\mathbb R^n_{++}$ of strictly positive weights, with $\omega_N=\sum_{j\in N}\omega_j$,
and focus on the equilibrium \cite{Micro} of the game where data points are players who strategically choose their memberships distribution $q_i\in\Delta_i$ with payoff
equal to fraction $\frac{\omega_i}{\omega_N}W(q_1,\ldots ,q_n)$ of the global score attained at any strategy profile $(q_1,\ldots ,q_n)$.
\begin{definition}
Fuzzy clustering $\hat{\textbf q}\in\Delta_N$ is a local maximizer if

$W_i(\hat q_i|\hat{\textbf q}_{-i})\geq W_i(q_i|\hat{\textbf q}_{-i})$ for all $q_i\in\Delta_i$ and all $i\in N$ (see $\textbf{[3]}$).
\end{definition}
This definition of local maximizer entails that the \textit{neighborhood} $\mathcal N(\textbf q)\subset\Delta_N$ of any $\textbf q\in\Delta_N$ is\smallskip

$\mathcal N(\textbf q)=\bigcup_{i\in N}\Big\{\tilde{\textbf q}:\tilde{\textbf q}=\tilde q_i|\textbf q_{-i},\tilde q_i\in\Delta_i\Big\}$.

\begin{definition}
The $(i,A)$-derivative of $W$ at $\textbf q\in\Delta_N$ is\bigskip

$\partial W(\textbf q)/\partial q^A_i=W(\overline{\textbf q}(i,A))-W(\underline{\textbf q}(i,A))=$\bigskip

$=W_i\Big(\overline q_i(i,A)|\overline{\textbf q}_{-i}(i,A)\Big)-W_i\Big(\underline q_i(i,A)|\underline{\textbf q}_{-i}(i,A)\Big)$,\bigskip

with $\overline{\textbf q}(i,A)=\Big(\overline q_1(i,A),\ldots ,\overline q_n(i,A)\Big)$ given by\bigskip

$\overline q_j^B(i,A)=q_j^B\text{ for all }j\in N\backslash i,B\in 2^N_j$\smallskip

$\overline q_j^B(i,A)=1\text{ for }j=i,B=A$ (hence $\overline q_j^B(i,A)=0\text{ for }j=i,B\neq A$)\bigskip

and $\underline{\textbf q}(i,A)=\Big(\underline q_1(i,A),\ldots ,\underline q_n(i,A)\Big)$ given by\bigskip

$q_j^B(i,A)=q_j^B\text{ for all }j\in N\backslash i,B\in 2^N_j$\smallskip

$q_j^B(i,A)=0\text{ for }j=i\text{ and all }B\in 2^N_i$\bigskip

thus $\nabla W(\textbf q)=\{\partial W(\textbf q)/\partial q^A_i:i\in N,A\in 2^N_i\}\in\mathbb R^{n2^{n-1}}$ is the (full) gradient of $W$ at $\textbf q$.
The $i$-gradient $\nabla_iW(\textbf q)\in\mathbb R^{2^{n-1}}$ of $W$ at $\textbf q=q_i|\textbf q_{-i}$ is set function $\nabla_iW(\textbf q):2^N_i\rightarrow\mathbb R$ defined by
$\nabla_iW(\textbf q)(A)=w_{\textbf q_{-i}}(A)$ for all $A\in 2^N_i$, where $w_{\textbf q_{-i}}$ is given by $\textbf{[4]}$.
\end{definition}
\begin{remark}
Membership distribution $\underline q_i(i,A)$ is the null one: its $2^{n-1}$ entries are all 0, hence $\underline q_i(i,A)\not\in\Delta_i$.
\end{remark}
The setting obtained thus far allows to conceive searching for a local maximizer hard clustering $\textbf q^*$ from given fuzzy clustering \textbf q as initial
candidate solution, and while maintaing the whole search within the continuum of fuzzy clusterings. This idea may be specified in alternative ways
yielding different local search methods. One possibility is the following.\smallskip

\textsc{LocalSearch}$(w,\textbf q)$\smallskip

\textsl{Initialize:} Set $t=0$ and $\textbf q(0)=\textbf q$, with requirement $|\{i:q_i^A>0\}|\in\{0,|A|\}$ for all $A\in 2^N$.\smallskip

\textsl{Loop 1:} While $0<\sum_{i\in A}q^A_i(t)<|A|$ for a $A\in 2^N$, set $t=t+1$ and
\begin{enumerate}
\item[(a)] select a $A^*(t)\in 2^N$ such that\bigskip

$\sum_{i\in A^*(t)}w_{\textbf q_{-i}(t-1)}(A^*(t))\geq\sum_{j\in B}w_{\textbf q_{-j}(t-1)}(B)$\bigskip

for all $B\in 2^N$ such that $0<\sum_{i\in B}q^B_j(t)<|B|$,
\item[(b)] for $i\in A^*(t)$ and $A\in 2^N_i$, define $q_i^A(t)=1\text{ if }A=A^*(t)$, and\smallskip

$q_i^A(t)=0\text{ if }A\neq A^*(t)$;

\item[(c)] for $j\in N\backslash A^*(t)$ and $A\in 2^N_j$ with $A\cap A^*(t)=\emptyset$, define $q^A_j(t)=q_j^A(t-1)+$\smallskip

$+\left(w(A)\sum_{B\cap A^*(t)\neq\emptyset:B\in 2^N_j}q_j^B(t-1)\right)
\left(\sum_{B'\cap A^*(t)=\emptyset:B'\in 2^N_j}w(B')\right)^{-1}$,

\item[(d)] for $j\in N\backslash A^*(t)$ and $A\in 2^N_j$ with $A\cap A^*(t)\neq\emptyset$, define $q^A_j(t)=0$.

\end{enumerate}
\textsl{Loop 2:} While $q_i^A(t)=1,|A|>1$ for a $i\in N$ and $w(A)<w(\{i\})+w(A\backslash i)$, set $t=t+1$ and define:\bigskip

$q^{\hat A}_i(t)=1\text{ if }|\hat A|=1$ for all $\hat A\in 2^N_i$;\bigskip

$q^{B}_j(t)=1\text{ if }B=A\backslash i\text{ for all }j\in A\backslash i,B\in 2^N_j$;\bigskip

$q^{\hat B}_{j'}(t)=q^{\hat B}_{j'}(t-1)\text{ for all }j'\in A^c,\hat B\in 2^N_{j'}$.\bigskip

\textsl{Output:} Set $\textbf q^*=\textbf q(t)$.\smallskip

Both \textsc{RoundUp} and \textsc{LocalSearch} yield a sequence $\textbf q(0),\ldots ,\textbf q(t^*)=\textbf q^*$ where $q_i^*\in ex(\Delta_i)$ for all $i\in N$.
In the former at the end of each iteration $t$ the novel $\textbf q(t)\in\mathcal N(\textbf q(t-1))$ is in the neighborhood of its predecessor. In the latter
$\textbf q(t)\not\in\mathcal N(\textbf q(t-1))$ in general, as in $|P|\leq n$ iterations of \textsl{Loop 1} a partition $\{A^*(1),\ldots ,A^*(|P|)\}=P$ is generated.
Selected clusters or blocks $A^*(t)\in 2^N$, $t=1,\ldots ,|P|$ are any of those where the sum over data points $i\in A^*(t)$ of $(i,A^*(t))$-derivatives
$\partial W(\textbf q(t-1))/\partial q^{A^*(t)}_i(t-1)$ is maximal. Once a block $A^*(t)$ is selected, then lines (c) and (d) make all data points
$j\in N\backslash A^*(t)$ redistribute the entire membership mass currently placed on subsets $A'\in 2^N_j$ with non-empty intersection $A'\cap A^*(t)\neq\emptyset$
over those remaining $A\in 2^N_j$ such that, conversely, $A\cap A^*(t)=\emptyset$. The redistribution is such that each of these latter gets a fraction
$w(A)/\sum_{B\in 2^N_j:B\cap A^*(t)=\emptyset}w(B)$ of the newly freed membership mass $\sum_{A'\in 2^N_j:A'\cap A^*(t)\neq\emptyset}q_j^{A'}(t-1)$.  
The subsequent \textsl{Loop 2} checks whether the partition generated by \textsl{Loop 1} may be improved by exctracting some outliers from existing blocks and putting
them in singleton blocks of the final output. An outlier basically is a data point displaying very unusual features. In the limit, cluster score $w$ may be such that
for some data points $i\in N$ global score decreases when $i$ joins any cluster $A\in 2^N_i,|A|>1$, that is to say\smallskip

$w(A)-w(A\backslash i)-w(\{i\})=\sum_{B\in 2^A\backslash 2^{A\backslash i}:|B|>1}\mu^w(B)<0$.
\begin{proposition}
Output $\textbf q^*$ of \textsc{LocalSearch}$(W,\textbf q)$ is a local maximizer.
\end{proposition}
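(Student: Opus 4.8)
The plan is to read off local maximality from the scalar-product identity \textbf{[5]}, reduce Definition 10 at the output to a single inequality per data point, and then observe that this inequality is exactly the stopping criterion of \textsl{Loop 2}.

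First I would record that $\textbf q^*$ is a hard clustering. \textsl{Loop 1} terminates in at most $n$ steps producing a partition $P=\{A^*(1),\ldots,A^*(|P|)\}$ of $N$, and each step of \textsl{Loop 2} merely replaces one block $A$ with $|A|>1$ by the two blocks $\{i\}$ and $A\setminus i$; hence $\textbf q^*=\textbf p$ for some partition $P^*$, with $q_i^A=1$ at $\textbf q^*$ exactly when $A\in P^*$ and $i\in A$. Since $\textbf q^*$ is a partition, \textbf{[5]} gives $W_i(q_i|\textbf q^*_{-i})=\langle q_i,w_{\textbf q^*_{-i}}\rangle$; because $\Delta_i$ is the simplex spanned by the unit vectors indexed by $2^N_i$, a linear functional $\langle q_i,w_{\textbf q^*_{-i}}\rangle$ is maximized over $q_i\in\Delta_i$ precisely when $q_i$ is supported on $\arg\max w_{\textbf q^*_{-i}}$. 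As $q_i^*$ is the extreme point concentrated on the block $A_i\in P^*$ containing $i$, Definition 10 holds at $\textbf q^*$ if and only if $w_{\textbf q^*_{-i}}(A_i)\geq w_{\textbf q^*_{-i}}(B)$ for all $i\in N$ and all $B\in 2^N_i$.

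Next I would evaluate $w_{\textbf q^*_{-i}}$ from \textbf{[4]} at this partition. For $B\in 2^N_i$ with $B\neq A_i$ we have $B\notin P^*$ (otherwise $B$ would be a second block of $P^*$ containing $i$), so $q_j^B=0$ for every $j$ at $\textbf q^*$; all terms of \textbf{[4]} with $C\neq\emptyset$ vanish and only $C=\emptyset$ survives, giving $w_{\textbf q^*_{-i}}(B)=\mu^w(\{i\})=w(\{i\})$. For $B=A_i$, all $j\in A_i\setminus i$ have $q_j^{A_i}=1$, so the Remark yields $w_{\textbf q^*_{-i}}(A_i)=w(A_i)-w(A_i\setminus i)$ (with $w(\emptyset)=0$ this covers a singleton block as well). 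Thus the local-maximality condition collapses to $w(A_i)-w(A_i\setminus i)\geq w(\{i\})$, i.e. $w(A_i)\geq w(\{i\})+w(A_i\setminus i)$, for every $i\in N$, an equality whenever $A_i=\{i\}$.

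Finally I would match this against \textsl{Loop 2}, whose guard keeps running exactly while some $i$ lies in a block $A$ with $|A|>1$ and $w(A)<w(\{i\})+w(A\setminus i)$; at termination no such $i$ remains, which is precisely the inequality obtained above, so $\textbf q^*$ is a local maximizer. It remains to confirm \textsl{Loop 2} terminates: every iteration keeps $\textbf q(t)$ a partition and splits one block into two, so the number of blocks strictly increases and is bounded by $n$; equivalently, by the Corollary $W$ changes by $w(\{i\})+w(A\setminus i)-w(A)>0$, so the process is finite and monotone, while untouched blocks keep satisfying their own (partition-local) inequality and the newly created block $A\setminus i$, if still violating it, is handled by a later iteration. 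I expect the only delicate point to be the computation in the third paragraph — the bookkeeping in \textbf{[4]} that pins $w_{\textbf q^*_{-i}}$ down to just the two values $w(\{i\})$ and $w(A_i)-w(A_i\setminus i)$ — everything else being a short translation between the algorithm's stopping rule and Definition 10.
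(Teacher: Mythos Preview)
Your proposal is correct and follows essentially the same route as the paper: both reduce Definition~10 at the output partition to the single inequality $w(A_i)\geq w(\{i\})+w(A_i\setminus i)$ for each $i$, via the observation that at a partition $w_{\textbf q^*_{-i}}$ takes only the two values $w(\{i\})$ and $w(A_i)-w(A_i\setminus i)$, and then match this against the termination condition of \textsl{Loop~2}. Your write-up is slightly tidier in that it invokes \textbf{[5]} and the simplex-extremal argument explicitly rather than computing $W(\textbf p)-W(q_i|\textbf p_{-i})$ by hand, and it also supplies the termination argument for \textsl{Loop~2}, which the paper leaves implicit.
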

\begin{proof}
It is plain that the output corresponds to a partition $P$. With the notation of corollary 8 in section 4, $\textbf q^*=\textbf p$. Accordingly, any data point $i\in N$ is
either in a singleton cluster $\{i\}\in P$ or else in a cluster $A\in P,i\in A$ such that $|A|>1$. In the former case, any membership reallocation deviating from $p_i^{\{i\}}=1$,
given memberships $p_j,j\in N\backslash i$, yields a cover (fuzzy or hard) where global score is the same as at $\textbf p$, since $\prod_{j\in B\backslash i}p_j^B=0$ for all
$B\in 2_i^N\backslash A$ (see example 2 above). In the latter case, any membership reallocation $q_i$ deviating from $p_i^A=1$ (given memberhips $p_j,j\in N\backslash i$) yields
a cover which is best seen by distinguishing between $2_i^N\backslash A$ and $A$. Also recall that $w(A)-w(A\backslash i)=\sum_{B\in 2^A\backslash 2^{A\backslash i}}\mu^w(B)$.
Again, all membership mass $\sum_{B\in 2_i^N\backslash A}q_i^B>0$ simply collapses on singleton $\{i\}$ because $\prod_{j\in B\backslash i}p_j^B=0$ for all $B\in 2_i^N\backslash A$.
Therefore,\smallskip

$W(\textbf p)-W(q_i|\textbf p_{-i})=w(A)-w(\{i\})+$\bigskip

$-\left(q_i^A\sum_{B\in 2^A\backslash 2^{A\backslash i}:|B|>1}\mu^w(B)+\sum_{B'\in 2^{A\backslash i}}\mu^w(B')\right)=$\bigskip

$=\left(p_i^A-q_i^A\right)\sum_{B\in 2^A\backslash 2^{A\backslash i}:|B|>1}\mu^w(B)$.\bigskip

Now assume that \textbf q is \textit{not} a local maximizer, i.e. $W(\textbf p)-W(q_i|\textbf p_{-i})<0$. Since $p_i^A-q_i^A>0$ (because $p_i^A=1$ and $q_i\in\Delta_i$ is a deviation
from $p_i$), then\bigskip

$\sum_{B\in 2^A\backslash 2^{A\backslash i}:|B|>1}\mu^w(B)=w(A)-w(A\backslash i)-w(\{i\})<0$.\bigskip

Hence \textbf q cannot be the output of \textsl{Second Loop}.
\end{proof}

In local search methods, the chosen initial canditate solution determines what neighborhoods shall be visited. The range of the objective function in a neighborhood is a set of
real values. In a neighborhood $\mathcal N(\textbf p)$ of a hard clustering $\textbf p$ or partition $P$ only those $\sum_{A\in P:|A|>1}|A|$ data points $i\in A$ in non-sigleton
blocks $A\in P,|A|>1$ can modify global score by reallocating their membership. In view of the above proof, the only admissible variations obtain by deviating from
$p_i^A=1$ with an alternative membership distribution $q_i$ such that $q_i^A\in[0,1)$, with\bigskip

$W(q_i|\textbf p_{-i})-W(\textbf p)=(q_i^A-1)\sum_{B\in 2^A\backslash 2^{A\backslash i},|B|>1}\mu^w(B)+(1-q_i^A)w(\{i\})$.\bigskip

Hence, choosing partitions as initial candidate solutions of \textsc{LocalSearch} is evidently
poor. A sensible choice should conversely allow the search to explore different neighborhoods where the objective function may range widely. A simplest example of such an initial
candidate solution is uniform distribution $q_i^A=2^{1-n}$. On the other hand, the input of local search fuzzy clustering algorithms is commonly desired to be close to a global
optimum, i.e. a maximizer in the present setting. This translates here into the idea of defining a suitable input by means of cluster score function $w$. Along this line, consider
$q_i^A=w(A)/\sum_{B\in 2^N_i}w(B)$, yielding $\frac{q_i^A}{q_i^B}=\frac{w(A)}{w(B)}=\frac{q^A_j}{q^B_j}$ for all $A,B\in 2^N_i\cap 2^N_j$ and all $i,j\in N$.

With a suitable initial candidate solution, the search may be restricted to explore only a maximum number of fuzzy clusterings, thereby containing (together with the quadratic MLE of
cluster score $w$) the computational burden. In particular, if $\textbf q(0)$ is the finest partition $\{\{1\},\ldots ,\{n\}\}$ or $q_i^{\{i\}}(0)=1$ for all $i\in N$, then the
search does not explore any neighborhood at all, and such an input coincides with the output. More reasonably, let $\mathbb A_{\textbf q}^{\max}=\{A_1,\ldots ,A_k\}$ denote the
collection of maximal data subsets where input memberships are strictly positive. That is, $q_i^{A_{k'}}>0$ for all $i\in A_{k'},1\leq k'\leq k$ as well as $q_j^B=0$ for all
$B\in 2^N\backslash\left(2^{A_1}\cup\cdots\cup2^{A_k}\right)$ and all $j\in B$. Then, the output shall be a partition $P$ each of whose blocks $A\in P$ satisfies $A\subseteq A_{k'}$
for some $1\leq k'\leq k$. Hence, by suitably choosing the input \textbf q, \textsc{LocalSearch} outputs a partition with no less than
some maximum desired number $k(\textbf q)$ blocks.
\section{Conclusions}
This paper approaches objective function-based fuzzy clustering by firstly eliciting a real-valued cluster score function, quantifying the positive worth of data subsets in the given
classification problem. Clustering is next interpreted in terms of combinatorial optimization via set partitioning. The proposed gradient-based local search relies on a novel
expansion of the MLE of near-Boolean functions \cite{Rossi2015} over the product of $n$ simplices, each of which is $2^{n-1}-1$-dimensional, $n$ being the number of data. The method
needs not the input to specify a desiderd number of clusters, as this latter is determined autonomously through optimization, and applies to any classification problem, handling data
sets not necessarily included in a Euclidean space: proximities between data points and within clusters may be quantified in any conceivable way, including information theoretic measurement
\cite{Pirro+2010}.  

\section{Appendix: continuum of polynomials}
Although not esplicitated, in this work two lattices have appeared thus far, namely the Boolean lattice $(2^N,\cap ,\cup)$ of subsets of $N$ ordered by inclusion $\supseteq$
and the geometric lattice $(\mathcal P^N,\wedge ,\vee)$ of partitions of $N$ ordered by coarsening $\geqslant$ \cite{Aigner79,Stern99}. Both, of course, are posets (partially ordered sets),
and M\"obius inversion applies to any (locally finite) poset, provided a bottom element exists \cite{Rota64}. The bottom subset clearly is $\emptyset$, while the bottom partition
$P_{\bot}=\{\{1\},\ldots ,\{n\}\}$ is the finest one. The analysis provided in this final appendix aims to fully exploit the power of M\"obius inversion towards further detailing and
formalizing an observation appearing in \cite{Rossi2015}, which in turn develops from a result contained in \cite{GilboaLehrer90GG,GilboaLehrer91VI}.

Let $(L,\wedge,\vee)$ be a lattice ordered by $\geqslant$ and with generic elements $x,y,z\in L$. Any lattice function $f:L\rightarrow\mathbb R$ has M\"obius inversion
$\mu^f:L\rightarrow\mathbb R$ given by $\mu^f(x)=\sum_{x_{\bot}\leqslant y\leqslant x}\mu_L(y,x)f(y)$, where $x_{\bot}$ is the bottom element and $\mu_L$ is the M\"obius function, defined
recursively on ordered pairs $(y,x)\in L\times L$ by $\mu_L(y,x)=-\sum_{y\leqslant z<x}\mu_L(z,x)$ if $y<x$ (i.e. $y\leqslant x$ and $y\neq x$) as well as $\mu_L(y,x)=1$ if $y=x$, while
$\mu_L(y,x)=0$ if $y\not\leqslant x$. The M\"obius function of the subset lattice implicitly appears since the beginning of this work, and is $\mu_{2^N}(B,A)=(-1)^{|A\backslash B|}$, with
$B\subset A$. Concerning the M\"obius function of $\mathcal P^N$, given any two partitions $P,Q\in\mathcal P^N$, if $Q<P=\{A_1,\ldots ,A_{|P|}\}$, then
for every block $A\in P$ there are blocks $B_1,\ldots ,B_{k_A}\in Q$ such that $A=B_1\cup\cdots\cup B_{k_A}$, with $k_A>1$ for at least one $A\in P$.
Segment $[Q,P]=\{P':Q\leqslant P'\leqslant P\}$ is thus isomorphic to product $\times_{A\in P}\mathcal P(k_A)$, where $\mathcal P(k)$ denotes the lattice
of partitions of a $k$-set. Accordingly, let $m_k=|\{A:k_A=k\}|$ for $k=1,\ldots ,n$. Then \cite[pp. 359-360]{Rota64},\bigskip

$\mu_{\mathcal P^N}(Q,P)=(-1)^{-n+\sum_{1\leq k\leq n}m_k}\prod_{1<k<n}(k!)^{m_{k+1}}$.\bigskip

Those partition functions $h:\mathcal P^N\rightarrow\mathbb R$ for which there exists a set function $v:2^N\rightarrow\mathbb R$ such that $h(P)=\sum_{A\in P}v(A)$ for all $P\in\mathcal P^N$ may be
said to be additively separable \cite{GilboaLehrer90GG,GilboaLehrer91VI}, with the notation $h=h_v$. An additively separable partition function $h=h_v$ for some $v$ has M\"obius inversion
$\mu^{h_v}$ living only on the modular elements \cite{Stanley1971} of the partition lattice. These are those partitions where only one block, at most, has cardinality $>1$. Therefore, together with
the bottom $P_{\bot}$ and top $P^{\top}=\{N\}$ elements of the lattice, all other modular elements are those partitions of the form $\{A\}\cup P^{A^c}_{\bot}$ for $A\in 2^N$ such that $1<|A|<n$,
where $P_{\bot}^{A^c}$ is the finest partition of $A^c$ \cite[Ex. 13, p. 71]{Aigner79}. The total number of such modular partitions is $2^n-n$. The M\"obius inversion of an additively
separable global game $h_v$ is detailed hereafter (see also \cite[Prop. 4.4, p. 138 and Appendix, p. 144]{GilboaLehrer90GG} and \cite[Prop. 3.3, p. 452]{GilboaLehrer91VI}. 
\begin{proposition}
If $h=h_v$, then $h=h_w$ for a continuuum of set functions $w:2^N\rightarrow\mathbb R,w\neq v$.
\end{proposition}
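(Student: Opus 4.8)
The plan is to exhibit explicitly a multiparameter family of set functions $w$, all distinct from $v$, that induce the same additively separable partition function $h_v$. The starting observation is that $h_v(P)=\sum_{A\in P}v(A)$ depends on $v$ only through a system of linear equations indexed by partitions $P\in\mathcal P^N$; there are $|\mathcal P^N|$ such equations (the Bell number $B_n$) but $2^n-1$ unknowns (the values $v(A)$ on nonempty $A$), and since $B_n$ grows far faster than $2^n$ these equations are highly redundant, so the solution set, if nonempty, is an affine subspace of positive dimension. Concretely I would argue that the constraints $h_v=h_w$ reduce to the requirement that $v$ and $w$ agree on the grand set value in the appropriate sense and that the Möbius inversions $\mu^{h_v}=\mu^{h_w}$ coincide — and, as recalled in the paragraph preceding the statement, $\mu^{h_v}$ lives only on the $2^n-n$ modular partitions $\{A\}\cup P^{A^c}_\bot$ together with $P_\bot$ and $P^\top$. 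So the map $v\mapsto h_v$ factors through these $2^n-n$ numbers, i.e. through far fewer than $2^n-1$ degrees of freedom, forcing a nontrivial kernel.

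First I would make the kernel completely explicit. The cleanest route: fix any $i\in N$ and any real $c$, and define $w^c$ by $w^c(A)=v(A)+c$ whenever $i\in A$ and $|A|>1$, say, while compensating on the singletons so that every block-sum is preserved — for instance, the classical trick is to perturb $v$ by a function of the form $A\mapsto g(A)$ where $g$ is "balanced" on every partition, i.e. $\sum_{A\in P}g(A)=0$ for all $P$. A concrete such $g$: pick distinct $i,j\in N$, set $g(\{i\})=1$, $g(\{j\})=-1$, $g(\{i,j\})=0$, and $g(A)=0$ for every other $A$; then for any partition $P$, either $i$ and $j$ lie in the same block (which must then be $\{i,j\}$ if the block has size $2$, but in general a larger block containing both, on which $g=0$) or in distinct singletons $\{i\},\{j\}\in P$ (contributing $1-1=0$) — one checks the remaining cases likewise give $\sum_{A\in P}g(A)=0$. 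Scaling $g$ by an arbitrary $c\in\mathbb R$ and setting $w=v+cg$ yields $h_w=h_{v+cg}=h_v+h_{cg}=h_v$, with $w\neq v$ whenever $c\neq 0$. That already produces a continuum.

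The step I expect to require the most care is verifying that $g$ (or whichever explicit perturbation I commit to) genuinely satisfies $\sum_{A\in P}g(A)=0$ for every partition $P\in\mathcal P^N$, not merely for partitions all of whose blocks are small: one must handle blocks of every size, and the case analysis on whether $i$ and $j$ are separated or co-located in $P$ is where a naive choice of $g$ can fail. A safe and conceptually transparent alternative, which I would present if the elementary $g$ feels fragile, is to invoke linear algebra directly: the linear map $\Phi:\mathbb R^{2^n-1}\to\mathbb R^{B_n}$, $v\mapsto h_v$, has a kernel of dimension at least $(2^n-1)-(2^n-n)=n-1$, because $h_v$ is determined by the $2^n-n$ Möbius coefficients on modular partitions (this is exactly the content of the cited fact that $\mu^{h_v}$ vanishes off the modular elements); hence the fibre $\Phi^{-1}(h_v)$ through $v$ is a translate of a space of dimension $\geq n-1\geq 1$ (for $n\geq 2$), i.e. a continuum, and every point of it other than $v$ is a set function $w\neq v$ with $h_w=h_v$. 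I would then close by noting, as a sanity check, that the explicit $g$ above is one nonzero element of this kernel, reconciling the two arguments. $\ \rule{0.5em}{0.5em}$
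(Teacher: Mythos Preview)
Your explicit perturbation $g$ does not work. Take $n\geq 3$ and the partition $P=\{\{i\},\,N\setminus\{i\}\}$: since $j\in N\setminus\{i\}$ and $|N\setminus\{i\}|\geq 2$, your definition gives $g(\{i\})=1$ and $g(N\setminus\{i\})=0$, so $\sum_{A\in P}g(A)=1\neq 0$. The ``remaining cases'' you defer are precisely the ones that break: whenever exactly one of $i,j$ sits in a singleton block and the other sits in a larger block, the sum is $\pm 1$. You correctly flagged this as the fragile step, and it does fail.

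Your fallback dimension-count argument, by contrast, is sound and is essentially what the paper does, though the paper carries it further. The paper computes the M\"obius inversion of $h_v$ on the modular partitions explicitly: $\mu^{h_v}(P_\bot)=\sum_{i\in N}v(\{i\})$ and $\mu^{h_v}(\{A\}\cup P_\bot^{A^c})=\mu^v(A)$ for $|A|>1$. Hence $h_w=h_v$ if and only if $\sum_i w(\{i\})=\sum_i v(\{i\})$ and $\mu^w(A)=\mu^v(A)$ for every $A$ with $|A|>1$. This makes the $(n-1)$-dimensional kernel completely explicit: one may vary the singleton M\"obius values $\mu^w(\{i\})=w(\{i\})$ freely subject to a fixed sum, keeping all higher M\"obius coefficients equal to those of $v$. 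Translating back to direct values, the correct version of your $g$ is $g(A)=\mathbf 1_{[i\in A]}-\mathbf 1_{[j\in A]}$ (nonzero on \emph{every} $A$ containing exactly one of $i,j$, not just on the two singletons); then for any partition $P$ one has $\sum_{A\in P}g(A)=1-1=0$ trivially, since each of $i,j$ lies in exactly one block. So the paper's route both subsumes your linear-algebra argument and hands you, for free, the repaired explicit construction.
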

\begin{proof}
Firstly, by direct substitution,\bigskip

$\mu^{h_v}(P)=\sum_{A\in P}\sum_{B\subseteq A}v(B)\sum_{Q\leqslant P:B\in Q}\mu_{\mathcal P^N}(Q,P)\text{ for all }P\in\mathcal P^N$.\bigskip

Secondly, by the recursive definition of M\"obius function $\mu_{\mathcal P^N}$,\bigskip

$P\neq \{B\}\cup P^{B^c}_{\bot}\Rightarrow\sum_{Q\leqslant P:B\in Q}\mu_{\mathcal P^N}(Q,P)=0$.\bigskip

This entails that M\"obius inversion $\mu^{h_v}$ lives only on modular elements, that is if $P\neq P_{\bot},P^{\top},\{A\}\cup P^{A^c}_{\bot}$ for all $A\in 2^N$,
then $\mu^{h_v}(P)=0$. The $2^n-n$ non-zero values are thus recursively determined as follows: $\mu^{h_v}(P_{\bot})=\sum_{i\in N}v(\{i\})$,
$\mu^{h_v}(\{A\}\cup P_{\bot}^{A^c})=\mu^v(A)$ if $1<|A|<n$, and $\mu^{h_v}(P^{\top})=\mu^v(N)$. Therefore, any $w\neq v$ satisfying
$\sum_{i\in N}v(\{i\})=\sum_{i\in N}w(\{i\})$ and $\mu^v(A)=\mu^w(A)$ for all $A\in 2^N,|A|>1$ also additively separates $h$, that is $h_v=h_w$.
\end{proof}

In view of corollary 8, the setting considered in this work deals precisely with additively separable partition functions, and thus the polynomial expression defined by $\textbf{[2]}$ is not unique.
More specifically, recall that the degree of a polynomial is the highest degree of its terms. Hence in $\textbf{[2]}$, for any chosen set function $w$ additively separating partiton function $h=h_w$,
the degree is $\max\{|A|:\mu^w(A)\neq 0\}$. Furthermore, every non-zero value of M\"obius inversion $\mu^w:2^N\rightarrow\mathbb R$ is a coefficient of the polynomial. It is not hard to see that the
only degree such that there exists a unique set function available for polynomial expression $\textbf{[2]}$ is 0, in which case the only possible set function $w$ takes values $w(\emptyset)=w(A)$ for
all $A\in 2^N$. Indeed, for any degree $k,0<k\leq n$ there exists a continuum of set functions available for additive separability and such that $\max\{|A|:\mu^w(A)\neq 0\}=k$, each defining alternative
but equivalent coefficients of the polynomial. 

\bibliographystyle{abbrv}
\bibliography{biblioPseudoBooleanClustering}

\end{document}